\documentclass[12pt]{article}
\usepackage{amsmath,amssymb,amsthm}
\usepackage[margin=1in]{geometry}
\usepackage{booktabs}
\usepackage{graphicx}
\usepackage{algorithm}
\usepackage{algorithmic}
\usepackage{natbib}
\usepackage{hyperref}
\usepackage{setspace}
\newif\ifanon
\anonfalse
\ifanon
  \newcommand{\pkgname}{the accompanying open-source \proglang{Python} package}
\else
  \newcommand{\pkgname}{the \texttt{causalfe} \proglang{Python} package}
\fi
\newcommand{\proglang}[1]{\textsf{#1}}

\newtheorem{proposition}{Proposition}
\newtheorem{assumption}{Assumption}

\newcommand{\tautrue}{\tau}
\newcommand{\tauhat}{\hat{\tau}}
\newcommand{\Ytil}{\tilde{Y}}
\newcommand{\Dtil}{\tilde{D}}
\newcommand{\E}{\mathbb{E}}

\title{Attenuated Heterogeneity in Fixed-Effects Causal Forests,
and a Cross-Fitted Correction}

\ifanon
  \author{}
  \date{}
\else
  \author{Harry Aytug\thanks{Amazon Web Services. E-mail: haytug@amazon.com.
  Replication code and the \texttt{causalfe} \proglang{Python} package
  implementing the correction are available at
  \url{https://github.com/haytug/causalfe}.}}
  \date{\today}
\fi

\begin{document}
\maketitle

\begin{abstract}
Causal forests that estimate conditional average treatment effects by averaging
honest leaf-level effects across trees are widely used in fixed-effects panel
settings. We show that this averaging systematically \emph{attenuates} the
estimated heterogeneity: the raw prediction behaves like $a + b\,\tau(x)$ with
slope $b<1$, so the spread of the CATEs is compressed toward the average effect,
and the additive recentering used to report an unbiased average treatment effect
does not fix it. Benchmarking against a similarity-weight generalized random
forest on the same within-transformed signal, we find both estimators attenuate
but the leaf-averaging construction attenuates materially more. We characterize
how $b$ moves with the design, worsening with lower signal-to-noise, smaller
panels, and higher dimension; this diagnosis is our main contribution. As a remedy we adapt the best-linear-predictor
calibration of \citet{cddf2025}, estimating the de-attenuation slope out-of-bag
so that it is self-contained within the observational panel and asymptotically
inert under a homogeneous effect. In simulations the correction cuts CATE mean-squared error
by 25--42\% relative to the recentering default; on a standard county
minimum-wage panel the attenuation is present but mild and the correction
restores the imposed spread. We ship the method in \pkgname.
\end{abstract}

\bigskip
{\sloppy
\noindent\textbf{Keywords:} best linear predictor; calibration; conditional
average treatment effect; difference-in-differences; panel data; regression to
the mean.\par}

\noindent\textbf{JEL:} C14; C21; C23.

\newpage

\section{Introduction}\label{sec:intro}

Applied researchers increasingly use causal forests to estimate how treatment
effects vary across units. In panel and difference-in-differences designs, the
leading concern is that unit and time fixed effects contaminate the estimated
heterogeneity: a forest applied to raw panel data may split on covariates
correlated with the fixed effects and report level differences as effect
differences. \citet{kattenberg2023} address this with causal forests with fixed
effects (CFFE), which remove the fixed effects by a within-transformation
computed \emph{locally} at each tree node, so that splits respond to genuine
treatment-effect heterogeneity rather than to between-node variation in the
fixed effects.

This paper studies a separate property of a widely implemented class of such
estimators: forests that form the CATE by \emph{averaging} honest leaf-level
treatment effects across trees, the causal-tree-averaging construction of
\citet{athey2016}, as distinct from the similarity-weight aggregation of the
generalized random forest \citep{wager2018,athey2019grf}. In the fixed-effects
panel setting, such averaging systematically attenuates the heterogeneity it is
meant to recover. Attenuation is not unique to this class: the similarity-weight
GRF, estimated on the same within-transformed signal, is also biased toward the
mean, as any regularized nonparametric estimator of a noisy surface will be. But
the leaf-averaging construction attenuates \emph{materially and consistently
more}, and we quantify the gap directly against a GRF competitor across designs.

\paragraph{The attenuation.}
Let $\tau(x)$ denote the true CATE and $\tauhat_{\text{raw}}(x)$ the raw forest
average. Across designs we find
\begin{equation}
  \tauhat_{\text{raw}}(x) \approx a + b\,\tau(x), \qquad 0 < b < 1,
  \label{eq:attenuation}
\end{equation}
so the estimated spread is compressed toward the mean by the factor $b$. This is
a form of regression to the mean: honest leaf estimates on small,
within-transformed nodes have limited treatment variation, and averaging
noisy, shrunken leaf effects across trees compounds the shrinkage. The common
device that recenters the predictions by an additive constant, so their mean
equals a separately estimated, unbiased within-fixed-effects average treatment
effect, corrects the \emph{level}; being additive, it leaves the slope $b$ and
the compressed heterogeneity unchanged.

\paragraph{The correction, and its relation to existing tools.}
To undo the compression we use a best-linear-predictor calibration. In
within-transformed space the outcome satisfies a moment condition linear in
$\tau(x)$; regressing the residualized outcome on the residualized treatment
interacted with a centered forest proxy recovers a slope that rescales the
compressed heterogeneity, using no ground truth. This regression is not new. The
slope is exactly the best-linear-predictor coefficient of \citet{cddf2025}, and
it is the quantity the \texttt{grf} package \citep{grf,atheywager2019} reports as
\texttt{differential.forest.prediction} when testing whether a forest's
heterogeneity is well calibrated. Our use of it differs on two points. Where
those tools use the slope as a diagnostic, testing whether it equals one, we
invert it to correct the predictions, shipping a de-attenuated CATE rather than a
test statistic. And we require the proxy to be formed out-of-bag, for each unit
averaging only the trees whose subsample excluded it, because an in-sample proxy
calibrates the forest's noise to itself and manufactures spurious heterogeneity
even when the true effect is homogeneous. The out-of-bag construction removes this
covariation, so the correction is asymptotically inert under the null: its
de-attenuation slope vanishes as the panel grows, with only a small residual
dispersion at finite sample size.

\paragraph{Contributions and relation to the literature.}
Our primary contribution is diagnostic. The best-linear-predictor slope and the
observation that forest CATEs can be miscalibrated are established
\citep{cddf2025,atheywager2019}. What is not established is a characterization of
the attenuation for the leaf-averaging class of causal forests
\citep{athey2016} in the fixed-effects panel setting: how large the slope $b$ is,
how it moves with the design, and how it compares to the similarity-weight
alternative. We map $b$ across signal-to-noise, panel size,
dimensionality, and forest hyperparameters, benchmark it against a generalized
random forest on the same signal, and show it is systematically below one,
and consistently further below one than the GRF's, while the standard
level-recentering leaves it untouched (Section~\ref{sec:attenuation}). Our secondary contribution is a
correction suited to this setting. Rescaling attenuated CATE estimates is itself
established: \citet{lengdimmery2024} fit a linear (Platt-style) rescaling and
\citet{vanderlaan2023} an isotonic calibration map. Both calibrate against an
external unbiased benchmark, a held-out randomized difference-in-means, and are
developed for the randomized-experiment setting. Our correction requires no such
benchmark. It is self-contained within the observational panel, using the
within-fixed-effects average treatment effect as the level anchor and an
out-of-bag forest proxy for the direction, under difference-in-differences
identification. We are not aware of a panel or difference-in-differences
causal-forest method, including the difference-in-difference causal forest of
\citet{gavrilova2025}, that de-attenuates its heterogeneity estimates. We show
out-of-bag cross-fitting is what secures null-safety, provide simulation
evidence (25--42\% CATE mean-squared-error reduction), and ship the method in
\pkgname\ (Sections~\ref{sec:correction}--\ref{sec:application}). We do not claim the
best-linear-predictor object, the miscalibration concept, or slope-rescaling of
CATEs as new, and cite \citet{cddf2025}, \citet{atheywager2019},
\citet{lengdimmery2024}, and \citet{vanderlaan2023} as their sources.

\section{Setup: causal forests with fixed effects}\label{sec:setup}

\subsection{Panel model and estimand}
We observe a panel $\{(Y_{it}, D_{it}, X_i)\}$ for units $i=1,\dots,N$ over
periods $t=1,\dots,T$, with a binary, staggered-or-single-event treatment
$D_{it}\in\{0,1\}$ and time-invariant covariates $X_i$. The outcome follows the
two-way fixed-effects model
\begin{equation}
  Y_{it} = \alpha_i + \gamma_t + \tautrue(X_i)\,D_{it} + \varepsilon_{it},
  \label{eq:model}
\end{equation}
where $\alpha_i$ and $\gamma_t$ are unit and time fixed effects and
$\tautrue(x)$ is the conditional average treatment effect at covariate value
$x$, the estimand of interest. The effect enters \eqref{eq:model} without a
unit-level unobserved gain, so under Assumption~\ref{ass:id} the conditional
effect on the treated coincides with the population conditional effect; we
therefore refer to $\tautrue(x)$ simply as the CATE throughout. We take identification of $\tautrue(\cdot)$ as given,
under the standard conditions stated in Assumption~\ref{ass:id}, and focus on
its estimation. The attenuation we study and the correction we propose concern
estimation only; they do not weaken or strengthen these conditions.

\begin{assumption}[Identification]\label{ass:id}
For each covariate value $x$ and each treated period,
(i) \emph{conditional parallel trends}: absent treatment, the expected outcome
change from the pre-period is equal across the treated group and the
not-yet-treated (or never-treated) comparison group, conditional on $X_i=x$;
(ii) \emph{no anticipation}: potential outcomes are unaffected by treatment in
periods before adoption; and (iii) \emph{overlap}: the conditional treatment probability is interior,
$0 < \Pr(D_{it}=1\mid X_i=x,\,t) < 1$, at every covariate value $x$ and treated
period.
\end{assumption}

Assumption~\ref{ass:id} is the covariate-conditional two-way-fixed-effects
analogue of the conditions in \citet{callaway2021difference} and
\citet{kattenberg2023}; under it the within-transformed contrast identifies
$\tautrue(x)$. The remaining conditions we need are regularity conditions on the
data-generating process that govern the behavior of the forest itself, which we
collect next.

\begin{assumption}[Regularity]\label{ass:reg}
(i) Covariates $X_i$ are time-invariant with bounded support; (ii) the
conditional mean of the within-transformed outcome is Lipschitz in $x$;
(iii) conditional second moments of the within-transformed outcome and treatment
are bounded, and $\sum_{it}\Dtil_{it}^2>0$ within each estimation region so the
local within estimator is defined; and (iv) observations are independent across
units, with arbitrary dependence within a unit, so that resampling and honest
splitting at the unit level respect the panel's dependence structure.
\end{assumption}

Part (iv) is what makes the out-of-bag construction of
Section~\ref{sec:correction} a valid cross-fit: because units are the
independent sampling blocks, a tree that excludes unit $i$ carries no
information about $i$'s idiosyncratic noise. We invoke Assumption~\ref{ass:reg}
where it is used and flag the one place (Proposition~\ref{prop:null}) where a
further condition on leaf size would be needed for a fully formal statement.

\subsection{The averaged-tree estimator}\label{sec:averaged}
The estimator we study builds an ensemble of honest causal trees, each grown on
a cluster-level subsample (units, not observations, are sampled so that all
observations of a unit stay together). Within a tree, every node
$\mathcal{N}$ is treated locally: the outcome and treatment are residualized by
a within-transformation computed \emph{using only the observations in
$\mathcal{N}$},
\begin{equation}
  \Ytil_{it}^{\mathcal{N}} = Y_{it} - \hat\alpha_i^{\mathcal{N}} - \hat\gamma_t^{\mathcal{N}},
  \qquad
  \Dtil_{it}^{\mathcal{N}} = D_{it} - \hat\delta_i^{\mathcal{N}} - \hat\eta_t^{\mathcal{N}},
  \label{eq:resid}
\end{equation}
obtained by alternating unit/time demeaning (iterative projection, converging in
a few passes). The local treatment effect is the within estimator
$\hat\tau_{\mathcal{N}} = \sum_{\mathcal{N}} \Dtil_{it}\Ytil_{it} \big/
\sum_{\mathcal{N}} \Dtil_{it}^2$, and candidate splits $S$ into children
$\mathcal{L},\mathcal{R}$ are scored by the heterogeneity criterion
$\Delta(S) = (n_{\mathcal{L}} n_{\mathcal{R}}/n^2)(\hat\tau_{\mathcal{L}} -
\hat\tau_{\mathcal{R}})^2$. Honesty is enforced by splitting the tree's
subsample into a structure half (used to choose splits) and an estimation half
(used to compute the leaf effects), with the split taken at the unit level.

The feature that drives our results is the \emph{aggregation}. Writing
$\hat\tau_b(x)$ for the leaf effect that tree $b$ assigns to a point $x$, the
forest prediction is the simple average
\begin{equation}
  \tauhat_{\text{raw}}(x) = \frac{1}{B}\sum_{b=1}^{B} \hat\tau_b(x).
  \label{eq:average}
\end{equation}
This is the causal-tree-averaging construction of \citet{athey2016}. It differs
from the generalized random forest of \citet{wager2018,athey2019grf}, which does
not average leaf effects but instead uses the forest to build similarity weights
$\alpha_i(x)$ and solves a single locally weighted moment condition. The
distinction matters quantitatively: as we show in
Section~\ref{sec:attenuation}, both constructions compress the estimated
heterogeneity, but averaging noisy, honestly-estimated leaf effects compresses
it substantially more than similarity weighting does on the same signal.

\subsection{Level recentering}\label{sec:recenter}
Because each leaf effect in \eqref{eq:average} is a within estimator on a small
honest subsample, the average $\tauhat_{\text{raw}}(x)$ is a biased estimate of
the overall level. A standard remedy reports the average treatment effect
separately, from the full-sample within-fixed-effects estimator
\begin{equation}
  \widehat{\mathrm{ATE}} = \frac{\sum_{it}\Dtil_{it}\Ytil_{it}}{\sum_{it}\Dtil_{it}^2},
  \label{eq:ate}
\end{equation}
which is numerically the two-way fixed-effects regression coefficient and
admits the usual cluster-robust (pair-clustered) standard error. Predictions are
then recentered by the additive constant that equates their mean to
\eqref{eq:ate}:
\begin{equation}
  \tauhat_{\text{rec}}(x) = \tauhat_{\text{raw}}(x)
    + \big(\widehat{\mathrm{ATE}} - \overline{\tauhat_{\text{raw}}}\big).
  \label{eq:recenter}
\end{equation}
This is a common default reporting mode. It corrects the level: by construction
$\overline{\tauhat_{\text{rec}}} = \widehat{\mathrm{ATE}}$. Because it is an
additive shift, it leaves the \emph{spread} of the predictions unchanged, that
is, their standard deviation, their slope on $\tautrue(x)$, and every centered
moment. If the raw predictions attenuate the heterogeneity, recentering does not
fix it. That is the gap the next two sections diagnose and close.

\section{Attenuation of the heterogeneity}\label{sec:attenuation}

\subsection{The phenomenon}
Define the attenuation slope $b$ as the coefficient in the population
least-squares projection of the raw forest prediction on the truth,
\begin{equation}
  b = \frac{\operatorname{Cov}\!\big(\tauhat_{\text{raw}}(X),\, \tautrue(X)\big)}
           {\operatorname{Var}\!\big(\tautrue(X)\big)},
  \label{eq:bdef}
\end{equation}
so that $\tauhat_{\text{raw}}(x)\approx a + b\,\tautrue(x)$ as in
\eqref{eq:attenuation}. A slope $b=1$ means the forest tracks the heterogeneity
one-for-one; $b<1$ means it is compressed toward the mean. Across every design
we examine, $b$ is bounded well below one. Because recentering
\eqref{eq:recenter} is additive it does not enter \eqref{eq:bdef}: the
recentered predictions have the same slope $b$ and the same standard deviation
as the raw ones. Attenuation is therefore a property of the reported CATEs under
the standard workflow, not an artifact removed by level correction.

\subsection{Why averaging attenuates}
The mechanism is regression to the mean, operating twice. First, each leaf
effect is a within estimator on a small honest estimation sample in which the
residualized treatment $\Dtil$ has limited variation; such estimates are noisy
and, once assigned as a constant prediction over the leaf, shrink the effective
contrast between high- and low-effect regions. Second, averaging across trees in
\eqref{eq:average} combines many such noisy leaf assignments for the same point
$x$; when a point is sometimes placed in a high-effect leaf and sometimes in a
lower one, the average pulls its prediction toward the center of the effect
distribution. Both forces push $b$ below one, and both intensify when
the per-leaf signal is weak, whether from few units, low signal-to-noise, or
many candidate covariates diluting the splits. This is the qualitative content
of the comparative statics we document next. The GRF construction of
Section~\ref{sec:averaged} does not assign and average constant leaf effects; it
still shrinks a noisy surface toward the mean, so its slope is also below one,
but it avoids the second, averaging-specific channel, which is why it attenuates
less than the averaged-tree class. Table~\ref{tab:bsurface} shows this gap
directly. Two caveats keep the comparison honest. The GRF is fit on the outcome
and treatment residualized on unit and period effects once over the full sample,
whereas the averaged-tree forest residualizes locally within each node
\eqref{eq:resid}; the two estimators therefore differ not only in aggregation but
in where the fixed-effect transform is applied, so the measured gap bounds rather
than exactly isolates the averaging channel. We hold the identifying signal fixed,
feeding GRF the same within-transformed contrast, so the comparison is not
confounded by different fixed-effect handling at the level of what is identified;
it is the locality of the transform, not its target, that still differs.

\subsection{The b-surface}\label{sec:bsurface}
\begin{table}[t]
\centering
\caption{Attenuation slope $b$ across designs, for the averaged-tree fixed-effects
forest and a similarity-weight generalized random forest \citep{athey2019grf}
fit on the same within-transformed signal. Each row varies one factor around
the base design ($N=300$ units, $T=4$, $p=5$, signal-to-noise $=1$, depth $4$,
minimum leaf $20$, $100$ trees), averaging over twenty simulation seeds (Monte
Carlo standard errors in parentheses). A slope below one indicates the estimator
compresses the CATE heterogeneity toward its mean; recentering does not change
these values. Both estimators attenuate, but the averaged-tree slope is
everywhere the lower of the two.}
\label{tab:bsurface}
\begin{tabular}{llcc}
\toprule
Design factor & Value & Averaged-tree $b$ & GRF $b$ \\
\midrule
Signal-to-noise & 0.25 & 0.38 (0.03) & 0.46 (0.04) \\
 & 0.5 & 0.49 (0.03) & 0.61 (0.03) \\
 & 1.0 & 0.61 (0.02) & 0.73 (0.02) \\
 & 2.0 & 0.68 (0.01) & 0.81 (0.02) \\
 & 4.0 & 0.72 (0.01) & 0.84 (0.02) \\
\midrule
Panel size ($n$ units) & 100 & 0.31 (0.03) & 0.47 (0.04) \\
 & 200 & 0.54 (0.02) & 0.69 (0.03) \\
 & 400 & 0.64 (0.01) & 0.79 (0.02) \\
 & 800 & 0.75 (0.01) & 0.82 (0.01) \\
\midrule
Dimension $p$ & 2 & 0.70 (0.02) & 0.83 (0.02) \\
 & 5 & 0.61 (0.02) & 0.73 (0.02) \\
 & 10 & 0.53 (0.02) & 0.67 (0.02) \\
 & 20 & 0.51 (0.02) & 0.69 (0.02) \\
\midrule
Tree depth & 2 & 0.56 (0.01) & 0.66 (0.02) \\
 & 4 & 0.61 (0.02) & 0.73 (0.02) \\
 & 6 & 0.61 (0.01) & 0.74 (0.02) \\
 & 8 & 0.61 (0.01) & 0.74 (0.02) \\
\bottomrule
\end{tabular}
\end{table}

Table~\ref{tab:bsurface} reports the slope. For the averaged-tree forest it is
everywhere below one, ranging from $0.31$ to $0.75$, and it moves in interpretable
directions. Attenuation eases as the signal strengthens ($b$ rises from $0.38$ to
$0.72$ as the signal-to-noise ratio grows) and as the panel grows ($b$ from $0.31$
at $100$ units to $0.75$ at $800$); it worsens as the covariate dimension rises
and dilutes the splits ($b$ from $0.70$ at $p=2$ to about $0.51$ at $p=20$). It is
nearly flat in tree depth. The generalized random forest, fit on the same
within-transformed signal, tells the same qualitative story with one difference of
degree: its slope is also below one in every design, so similarity weighting does
not escape attenuation, but it is higher than the averaged-tree slope in every
row, by roughly $0.10$ to $0.16$ in the small-panel, low-signal, and
high-dimensional regimes where the averaging-specific channel bites hardest, and
by less where the per-leaf signal is strong. The gaps are large relative to their
Monte Carlo standard errors. No single number is universal, but the compression is
pervasive, is governed by the per-leaf signal, and is systematically worse for
leaf averaging than for similarity weighting, as the mechanism in the previous
subsection predicts. The practical implication is that in the small or noisy
panels common in difference-in-differences applications, the reported CATE
distribution can understate the true spread by a factor of two or more, and by
more for the averaged-tree forests that are our focus.

\section{A cross-fitted calibration correction}\label{sec:correction}

\subsection{The best-linear-predictor slope}
The correction inverts \eqref{eq:attenuation} using a feasible moment
restriction that requires no ground truth. In the within-transformed space of
\eqref{eq:resid}, evaluated on the full sample, the model \eqref{eq:model}
implies $\Ytil_{it} = \tautrue(X_i)\,\Dtil_{it} + \text{error}$. Let
$S(x) = \tauhat_{\text{oob}}(x) - \overline{\tauhat_{\text{oob}}}$ be a centered
forest proxy for the heterogeneity (its out-of-bag construction is defined
below). Anchoring the level at the unbiased $\widehat{\mathrm{ATE}}$ of
\eqref{eq:ate}, we estimate a single slope $s$ from the moment regression
\begin{equation}
  \Ytil_{it} - \widehat{\mathrm{ATE}}\,\Dtil_{it}
    = s\,\big(\Dtil_{it}\, S(X_i)\big) + e_{it},
  \label{eq:blp}
\end{equation}
and form the corrected prediction
\begin{equation}
  \tauhat_{\text{blp}}(x) = \widehat{\mathrm{ATE}} + s\,S(x).
  \label{eq:blppred}
\end{equation}
Equation~\eqref{eq:blp} is the best linear predictor of the residualized outcome
in the direction of the proxy; its slope $s$ rescales the compressed
heterogeneity back toward its true magnitude. The corrected prediction
\eqref{eq:blppred} preserves the unbiased level,
$\overline{\tauhat_{\text{blp}}} = \widehat{\mathrm{ATE}}$, exactly as
recentering does, so nothing is lost on the dimension the default already got
right. Note that $s$ is not the naive inverse $1/b$: because the proxy $S$ is
itself estimated with error, the MSE-optimal rescaling shrinks toward one, so
$s$ is a partial de-attenuation that accounts for proxy noise, which is why it
improves mean-squared error rather than overshooting.

\subsection{Out-of-bag cross-fitting and null-safety}
The proxy $S$ must be cross-fitted, and the forest supplies a natural means at no
extra cost. Each tree is grown on a subsample of units; for observation $i$ we
form $\tauhat_{\text{oob}}(X_i)$ by averaging only those trees whose subsample
\emph{excluded} unit $i$. This out-of-bag proxy is independent of the leaf
estimates that see unit $i$, so the calibration in \eqref{eq:blp} does not
regress the forest's estimation noise on itself.

This requirement has teeth. If instead one uses the in-sample proxy
$\tauhat_{\text{raw}}(X_i)$, the slope $s$ picks up the spurious covariation
between a point's prediction and its own contribution to the leaves that formed
it. Under a homogeneous effect this manufactures heterogeneity out of
noise, and the in-sample calibration reports a dispersed CATE distribution where
none exists. The out-of-bag proxy removes that covariation, and the slope
collapses toward zero when there is nothing to de-attenuate.

The slope in \eqref{eq:blp} is a ratio,
\begin{equation}
  s = \frac{\sum_{it}\hat g_{it}\,\big(\Ytil_{it}-\widehat{\mathrm{ATE}}\,\Dtil_{it}\big)}
           {\sum_{it}\hat g_{it}^2},
  \qquad \hat g_{it} = \Dtil_{it}\,S(X_i),
  \label{eq:sratio}
\end{equation}
and null-safety is a statement about where this ratio concentrates when there is
nothing to de-attenuate. The delicate point is that under a homogeneous effect
\emph{both} the numerator and the denominator of \eqref{eq:sratio} are small: the
proxy $S$ is then pure estimation noise, so the numerator has mean zero and the
denominator, $\sum_{it}\Dtil_{it}^2 S(X_i)^2$, is itself shrinking with the proxy
variance. The result below therefore controls the denominator explicitly, through
a leaf-size condition, rather than treating $s$ as a numerator alone.

\begin{proposition}[Null-safety]\label{prop:null}
Suppose the treatment effect is homogeneous, $\tautrue(x)\equiv\bar\tau$, the
forest is honest, and the leaves are grown so that the out-of-bag proxy retains a
non-degenerate variance in the limit: $\operatorname{Var}(S(X))\to\sigma_S^2>0$
with $\E[\Dtil^2 S^2]$ bounded away from zero (a bounded-leaf-size condition,
which rules out leaves collapsing to a constant proxy). Let $s^{\text{oob}}$ and
$s^{\text{in}}$ denote the slope \eqref{eq:sratio} using, respectively, the
out-of-bag proxy $S=\tauhat_{\text{oob}}-\overline{\tauhat_{\text{oob}}}$ and the
in-sample proxy $S=\tauhat_{\text{raw}}-\overline{\tauhat_{\text{raw}}}$. Then, as
the number of units grows, $s^{\text{oob}}\to 0$, so the corrected prediction
$\tauhat_{\text{blp}}(x)\to\bar\tau$ and no spurious heterogeneity is introduced.
The in-sample slope $s^{\text{in}}$ converges to a strictly positive limit, so
$\tauhat_{\text{blp}}^{\text{in}}$ reports dispersion where none exists.
\end{proposition}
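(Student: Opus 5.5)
Under the null, $\tautrue\equiv\bar\tau$, the within-transformed model gives $\Ytil_{it}=\bar\tau\,\Dtil_{it}+\tilde\varepsilon_{it}$. Substituting this into \eqref{eq:sratio} splits the numerator into a noise term and a level-error term:
\begin{equation*}
  \sum_{it}\hat g_{it}\big(\Ytil_{it}-\widehat{\mathrm{ATE}}\,\Dtil_{it}\big)
  = \sum_{it}\Dtil_{it}S(X_i)\,\tilde\varepsilon_{it}
  \;-\;(\widehat{\mathrm{ATE}}-\bar\tau)\sum_{it}\Dtil_{it}^2 S(X_i).
\end{equation*}
I will divide numerator and denominator by $NT$. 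The denominator is handled by the stated condition: by a law of large numbers across independent units (Assumption~\ref{ass:reg}(iv)), $(NT)^{-1}\sum_{it}\Dtil_{it}^2S(X_i)^2$ is asymptotically bounded below by a positive constant, since $\E[\Dtil^2S^2]$ is bounded away from zero. It then suffices to show that the normalized numerator tends to zero for the out-of-bag proxy and to a strictly positive constant for the in-sample proxy.

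\textbf{Out-of-bag proxy.}
\begin{itemize}
\item[(a)] The second term is $o_p(1)$. The within estimator \eqref{eq:ate} is consistent for $\bar\tau$ under Assumption~\ref{ass:id}. The factor $(NT)^{-1}\sum\Dtil^2 S$ is bounded, using the second moments in Assumption~\ref{ass:reg}(iii) and Cauchy--Schwarz with the bounded $\operatorname{Var}(S)$.
\item[(b)] The first term vanishes by conditioning on the trees. By construction, $S(X_i)$ for unit $i$ uses only trees whose subsample excluded $i$. By Assumption~\ref{ass:reg}(iv), $S(X_i)$ is therefore independent of unit $i$'s noise $\{\varepsilon_{it}\}_t$ given $X_i$ and the other units. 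Strict exogeneity of $\varepsilon$ given $(X,D)$ then gives $\E[\Dtil_{it}S(X_i)\tilde\varepsilon_{it}\mid S(X_i),D_i,X_i]=0$, so each unit's summand has mean zero.
\item[(c)] The variance of the average is $O(1/N)$, up to cross-unit covariance terms. These arise because unit $j$'s proxy may use trees that contain unit $i$, and vice versa.
\end{itemize}

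\textbf{In-sample proxy.} Here $\tauhat_{\text{raw}}(X_i)$ includes trees in which unit $i$ sits in the honest estimation half of its leaf. In each such tree the leaf effect contains the term $\sum_{t}\Dtil_{it}\tilde\varepsilon_{it}\big/\sum_{\text{leaf}}\Dtil^2$. Hence $\E[\Dtil_{it}S(X_i)\tilde\varepsilon_{it}]$ picks up a covariance roughly equal to
\begin{equation*}
  \pi\,\E\!\left[\frac{(\sum_t\Dtil_{it}\tilde\varepsilon_{it})^2}{T\sum_{\text{leaf}}\Dtil^2}\right],
\end{equation*}
where $\pi$ is the fraction of trees placing $i$ in an estimation sample. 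The bounded-leaf-size condition keeps $\sum_{\text{leaf}}\Dtil^2$ bounded, so this self-influence term is of order $1/(\text{leaf size})$ and does not vanish as $N$ grows. Combined with the bounded denominator, this gives $s^{\text{in}}\to c>0$. The level error and the remaining cross terms are handled exactly as in (a) and (c).

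Given $s^{\text{oob}}\to0$, the corrected prediction \eqref{eq:blppred} satisfies $\tauhat_{\text{blp}}(x)-\bar\tau=(\widehat{\mathrm{ATE}}-\bar\tau)+s^{\text{oob}}S(x)\to0$, because $S$ has bounded variance.

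\textbf{Main obstacle.} The hardest step is the cross-unit dependence in (c). The locally residualized $\Dtil^{\mathcal N}$ and $\Ytil^{\mathcal N}$ and the full-sample transform also mix noise across units. I would first treat the forest as a fixed, finite collection of trees with bounded leaves. Then each $\varepsilon_{it}$ influences only the proxies of $O(\text{leaf size})$ other units per tree, so the covariance sum is $O(N)$ and the normalized variance is $O(1/N)$. I would also use that the full-sample within transform perturbs each residual only by $O(1/N+1/T)$. Making this influence-counting rigorous under random tree structures is where the leaf-size condition is genuinely needed, and I expect the proof to state it there as an explicit high-level assumption.
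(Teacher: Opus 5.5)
Your proposal is correct and takes the same route as the paper's proof. Both substitute $\Ytil_{it}=\bar\tau\Dtil_{it}+u_{it}$, hold the denominator of \eqref{eq:sratio} away from zero through the leaf-size condition, use the out-of-bag independence of $S(X_i)$ from unit $i$'s noise to get a mean-zero numerator, and use the own-leaf self-influence term to get a strictly positive in-sample numerator. Your version is more careful than the paper's in two places: the paper factors $\E[S(X_i)\Dtil_{it}u_{it}]$ without stating the exogeneity of $u$ given the treatment path that this factorization requires, and it asserts that the numerator concentrates without addressing the cross-unit covariances (from overlapping trees and the full-sample transform) that you isolate in step (c).
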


\begin{proof}
Write the residualized outcome under homogeneity as
$\Ytil_{it} = \bar\tau\,\Dtil_{it} + u_{it}$, so
$\Ytil_{it}-\widehat{\mathrm{ATE}}\,\Dtil_{it} = u_{it} + (\bar\tau-\widehat{\mathrm{ATE}})\Dtil_{it}$
with $\widehat{\mathrm{ATE}}\to\bar\tau$. The denominator of \eqref{eq:sratio},
divided by the sample size, converges to $\E[\Dtil^2 S^2]$, which the leaf-size
condition holds bounded away from zero; the ratio $s$ therefore has the same
probability limit as its numerator scaled by this positive constant, and the 0/0
that would arise if the proxy variance were allowed to vanish is excluded by
assumption. For the numerator, the out-of-bag proxy $\tauhat_{\text{oob}}(X_i)$ is
a function only of trees whose subsamples exclude unit $i$, hence of
$\{u_{jt}\}_{j\neq i}$; under the cross-unit independence of
Assumption~\ref{ass:reg}(iv) it is independent of $u_{it}$, so
$\E[S(X_i)\,\Dtil_{it}\,u_{it}]=\E[S(X_i)\Dtil_{it}]\,\E[u_{it}]=0$ and the
numerator concentrates at zero, giving $s^{\text{oob}}\to 0$. For the in-sample
proxy, $\tauhat_{\text{raw}}(X_i)$ includes leaves estimated from unit $i$'s own
observations, so $S(X_i)$ is correlated with $u_{it}$; the numerator limit
$\E[S(X_i)\Dtil_{it}u_{it}]$ is strictly positive (it is the variance
contribution of $i$'s own noise to its own leaf), so $s^{\text{in}}\not\to 0$.
The homogeneity assumption isolates the noise channel; with genuine heterogeneity
the out-of-bag numerator additionally picks up the signal covariance
$\operatorname{Cov}(S,\tautrue)>0$, which is the de-attenuation of
Section~\ref{sec:correction}.
\end{proof}

\noindent The leaf-size condition is the one substantive regularity requirement:
it guarantees the proxy does not degenerate to a constant, which is what keeps the
denominator of \eqref{eq:sratio} positive and the ratio well-defined. It is mild
(honest forests with a fixed minimum leaf size satisfy it), and the null row of
Table~\ref{tab:mc} confirms the conclusion directly: with a homogeneous effect the
out-of-bag correction leaves the CATE spread at the noise floor, manufacturing no
heterogeneity.

Algorithm~\ref{alg:blp} states the full estimate-then-correct procedure.

\begin{algorithm}[htbp]
\caption{Attenuation-corrected fixed-effects causal forest}
\label{alg:blp}
\begin{algorithmic}[1]
\REQUIRE panel $\{(Y_{it}, D_{it}, X_i,\ \text{unit}\ i,\ \text{period}\ t)\}$;
number of trees $B$; subsample rate
\STATE grow $B$ honest causal trees; for tree $b$, draw a unit-level subsample,
       and \emph{within each node} residualize $Y,D$ on unit and period fixed
       effects by iterative two-way demeaning, split to maximize
       $\tau$-heterogeneity in $X$, and estimate leaf effects on the held-out
       honest subsample
\STATE \textbf{raw prediction:} $\tauhat_{\text{raw}}(x)\gets B^{-1}\sum_b \hat\tau_b(x)$
\STATE \textbf{level:} $\widehat{\mathrm{ATE}}\gets$ full-sample within-fixed-effects
       estimator \eqref{eq:ate}, with pair-clustered standard error
\STATE \textbf{out-of-bag proxy:} for each unit $i$, set
       $\tauhat_{\text{oob}}(X_i)\gets$ mean of $\hat\tau_b(X_i)$ over trees $b$
       whose subsample excluded $i$; center as $S(X_i)=\tauhat_{\text{oob}}(X_i)-\overline{\tauhat_{\text{oob}}}$
\STATE \textbf{calibration slope:} residualize $Y,D$ on the full sample and
       estimate $s$ from the moment regression \eqref{eq:blp}
\STATE \textbf{corrected prediction:} $\tauhat_{\text{blp}}(x)\gets\widehat{\mathrm{ATE}}+s\,S(x)$
\end{algorithmic}
\end{algorithm}

\section{Simulation evidence}\label{sec:mc}
\begin{table}[t]
\centering
\caption{Correction performance. For each design (base as in
Table~\ref{tab:bsurface}, averaged over twenty seeds, Monte Carlo standard errors
in parentheses) we report the attenuation slope $b$, the true and BLP-corrected
standard deviation of the CATE, and the CATE mean-squared error under additive
recentering, under the BLP correction, and for a similarity-weight GRF on the
same within-transformed signal recentered to the same average effect (so the two
estimators differ only in the CATE shape, not the level). The $\Delta\%$ column
is the percentage MSE reduction of the BLP correction over recentering, computed
as the mean across seeds of the per-seed reduction, so it need not equal the
reduction implied by the two averaged-MSE columns. The last row imposes a
homogeneous effect (true SD $=0$), where the slope $b$ is undefined (the truth is
constant, so it is not reported): the corrected spread stays at the noise floor,
confirming null-safety.}
\label{tab:mc}
{\footnotesize\setlength{\tabcolsep}{3pt}\begin{tabular}{lccccccc}
\toprule
& & \multicolumn{2}{c}{SD of CATE} & \multicolumn{3}{c}{CATE MSE} \\
\cmidrule(lr){3-4}\cmidrule(lr){5-7}
Design & $b$ & true & BLP & recenter & BLP & GRF & $\Delta\%$ \\
\midrule
Base ($\mathrm{snr}=1$) & 0.61 & 1.59 (0.01) & 1.42 (0.03) & 0.570 (0.034) & 0.384 (0.022) & 0.439 (0.026) & +32 (3) \\
Low snr $=0.5$ & 0.49 & 0.79 (0.01) & 0.63 (0.03) & 0.222 (0.016) & 0.164 (0.014) & 0.183 (0.012) & +25 (4) \\
High snr $=2$ & 0.68 & 3.18 (0.03) & 2.92 (0.06) & 1.662 (0.085) & 1.211 (0.059) & 1.276 (0.061) & +26 (2) \\
Small panel ($n=150$) & 0.46 & 1.58 (0.02) & 1.33 (0.07) & 0.875 (0.048) & 0.535 (0.041) & 0.651 (0.052) & +38 (4) \\
High-dim ($p=15$) & 0.48 & 1.60 (0.02) & 1.34 (0.05) & 0.850 (0.052) & 0.479 (0.029) & 0.613 (0.034) & +42 (3) \\
Null (homogeneous) & -- & 0.00 (0.00) & 0.15 (0.02) & 0.020 (0.002) & 0.035 (0.007) & 0.023 (0.002) & -129 (54) \\
\bottomrule
\end{tabular}}
\end{table}

Table~\ref{tab:mc} summarizes performance using the shipped implementation with
out-of-bag calibration. Under heterogeneity the correction restores most of the
lost spread (in the base design, from an attenuated standard deviation up toward
the true $1.59$) and reduces CATE mean-squared error by $25$ to $42$ percent
relative to the recentering default, with the largest gains in the small-panel
and high-dimensional regimes where attenuation is worst; the Monte Carlo standard
errors on the reduction (in parentheses) are small relative to the gains. The GRF
column answers the natural ``why not just use a generalized random forest''
question on the quantity that matters: recentered to the same average effect, the
GRF's CATE mean-squared error is above the corrected averaged-tree forest's in
every heterogeneous design, so correcting the averaged-tree forest is preferable
to switching estimators, not merely to leaving it uncorrected. The
homogeneous row checks null-safety. With no true heterogeneity the corrected
standard deviation is $0.15$, against a true value of zero: the correction does
introduce a small residual dispersion at $n=300$, but it is an order of magnitude
below the genuine heterogeneity in the other rows (true SD near $1.6$), and
Proposition~\ref{prop:null} shows it vanishes as the panel grows. Its
mean-squared error stays at the noise floor in absolute terms ($0.035$ against the
recentering $0.020$); the large negative percentage in that row is the ratio of
two values that are both close to zero, and its standard error is correspondingly
wide, so it should be read as noise-floor movement rather than a meaningful
degradation. The honest summary is that the correction is strongly active when
there is heterogeneity to recover and close to quiet, though not exactly silent at
finite $n$, when there is not.

\section{Illustration on a standard panel}\label{sec:application}

The simulations above use fully synthetic designs. To show that the attenuation
arises on the geometry of a real dataset, meaning its covariate distribution,
panel dimensions, and staggered treatment timing, we run an empirical Monte
Carlo on the county minimum-wage panel of \citet{callaway2021difference} (the
\texttt{mpdta} benchmark: 500 counties, 2003--2007, cohorts adopting in 2004,
2006, and 2007, with a large never-treated group). We use two real
time-invariant covariates: standardized log county population, and each county's
pre-treatment (2003) baseline log employment. We do not use this panel to
estimate the true minimum-wage effect, which is unknown. Instead we take its real
covariate values, unit and time structure, and treatment timing, and impose a
known \emph{nonlinear} heterogeneous effect,
$\tau(x) = -0.08 + 0.10\,(\tilde{x}_1 + 0.5\,\tilde{x}_1^2 - 0.4\,\tilde{x}_1\tilde{x}_2)$,
that varies with log population and its interaction with baseline employment.
The nonlinearity is deliberate: because our correction is a single global linear
rescaling, it cannot match this surface by construction, so any reduction in
mean-squared error is a genuine gain rather than an artifact of imposing exactly
the functional form the correction assumes. Because the truth is imposed, we can
measure the attenuation directly on realistic data geometry, which a plain
application cannot do because the true CATE is never observed. This is a
design-geometry exercise on a standard public dataset, and it neither overlaps
nor bears on any substantive application.

\begin{table}[t]
\centering
\caption{Empirical Monte Carlo on the \texttt{mpdta} geometry (twenty seeds,
imposed nonlinear $\tau$ in log population and baseline employment). The
averaged-tree forest attenuates the imposed heterogeneity and the out-of-bag
calibration restores most of it, with no external benchmark; a similarity-weight
GRF on the same signal attenuates less.}
\label{tab:empirical}
\begin{tabular}{lrrr}
\toprule
 & true & recenter & BLP \\
\midrule
SD of CATE & 0.106 & 0.089 & 0.102 \\
CATE MSE & --- & 0.0012 & 0.0010 \\
\midrule
\multicolumn{4}{l}{Averaged-tree $b=0.81$, GRF $b=0.90$; BLP slope $=1.15$; MSE reduction $+17\%$.} \\
\bottomrule
\end{tabular}
\end{table}

Table~\ref{tab:empirical} reports the result. The averaged-tree attenuation slope
is $b=0.81$, milder than in the small, high-dimensional synthetic designs of
Table~\ref{tab:bsurface}, as expected for a panel with a strong per-leaf signal
and few covariates, but still visibly below one, so the recentered predictions
understate the imposed spread (standard deviation $0.089$ against a true $0.106$).
The GRF slope on the same residualized signal is $0.90$, again attenuated but
less so, consistent with the synthetic evidence. The out-of-bag correction
expands the averaged-tree spread toward the truth ($0.102$) and reduces CATE
mean-squared error by about $17\%$, while recovering the average effect. It does
so even though the imposed surface is nonlinear and the correction is a
single linear rescaling, so the gain is not an artifact of matching the imposed
functional form. The magnitude here is design-specific; what the exercise shows
is that the attenuation and its correction behave on a familiar real-data
geometry as the theory and synthetic evidence predict.

\section{Conclusion}\label{sec:conclusion}

Causal forests that estimate heterogeneous effects by averaging honest
leaf-level treatment effects, an attractive and widely implemented construction
in fixed-effects panel settings, systematically compress the heterogeneity they
are meant to recover. We characterized this attenuation, showing that the slope
of the raw prediction on the truth is bounded well below one and moves
predictably with signal-to-noise, panel size, and dimensionality, that a
similarity-weight generalized random forest on the same signal attenuates in the
same direction but consistently less, and that the additive recentering used to
report an unbiased average treatment effect leaves the compression untouched. Because the underlying best-linear-predictor
calibration slope is already available from the work of \citet{cddf2025} and the
diagnostic tooling of \citet{atheywager2019}, the remedy is a repurposing rather
than a new estimator: we invert the slope into a correction, estimate it
out-of-bag so that it is self-contained within the observational panel and
asymptotically inert under a homogeneous effect, and ship it in \pkgname. For
applied work the recommendation is specific: when reporting the dispersion or
tails of a causal-forest CATE distribution in a panel setting, recentering alone
is not enough, and the reported spread should be read as a lower bound unless it
has been calibrated.

Two limitations remain. The correction is a global linear rescaling; where
attenuation varies across the covariate space, a \citet{vanderlaan2023}-style
monotone calibration may recover more, at the cost of the closed-form simplicity
we exploit. And our evidence is based on simulation and empirical Monte Carlo; a
formal characterization of the slope $b$ as a function of leaf size and
subsample rate, in the spirit of the forest asymptotics of \citet{wager2018}, is
left to future work.

\ifanon\else
\section*{Acknowledgments}
The author thanks colleagues for helpful comments. All errors are the author's
own.
\fi

\section*{Data availability statement}
The empirical illustration uses the publicly available \texttt{mpdta} county
minimum-wage panel of \citet{callaway2021difference}, distributed with the
\texttt{did} \proglang{R} package on CRAN
(\url{https://cran.r-project.org/package=did}). No new data were collected. All
other results come from fully reproducible simulations; the code that generates
every table and the empirical Monte Carlo, together with the software
implementing the method,
\ifanon
is included as supplementary material and will be deposited in a public
repository upon acceptance.
\else
is openly available at \url{https://github.com/haytug/causalfe}.
\fi

\section*{Funding}
The author received no specific funding for this work.

\section*{Disclosure statement}
The author reports there are no competing interests to declare.

\section*{Declaration of generative AI use}
The author used a generative AI assistant to support drafting, code
scaffolding, and literature search during the preparation of this manuscript;
all methodological claims, simulation results, and their interpretation were
verified by the author, who takes full responsibility for the content.


\begin{thebibliography}{11}
\providecommand{\natexlab}[1]{#1}
\providecommand{\url}[1]{\texttt{#1}}
\expandafter\ifx\csname urlstyle\endcsname\relax
  \providecommand{\doi}[1]{doi: #1}\else
  \providecommand{\doi}{doi: \begingroup \urlstyle{rm}\Url}\fi

\bibitem[Athey and Imbens(2016)]{athey2016}
Susan Athey and Guido Imbens.
\newblock Recursive partitioning for heterogeneous causal effects.
\newblock \emph{Proceedings of the National Academy of Sciences}, 113\penalty0
  (27):\penalty0 7353--7360, 2016.
\newblock \doi{10.1073/pnas.1510489113}.

\bibitem[Athey and Wager(2019)]{atheywager2019}
Susan Athey and Stefan Wager.
\newblock Estimating treatment effects with causal forests: An application.
\newblock \emph{Observational Studies}, 5\penalty0 (2):\penalty0 37--51, 2019.

\bibitem[Athey et~al.(2019)Athey, Tibshirani, and Wager]{athey2019grf}
Susan Athey, Julie Tibshirani, and Stefan Wager.
\newblock Generalized random forests.
\newblock \emph{Annals of Statistics}, 47\penalty0 (2):\penalty0 1148--1178,
  2019.
\newblock \doi{10.1214/18-AOS1709}.

\bibitem[Callaway and Sant'Anna(2021)]{callaway2021difference}
Brantly Callaway and Pedro H.~C. Sant'Anna.
\newblock Difference-in-differences with multiple time periods.
\newblock \emph{Journal of Econometrics}, 225\penalty0 (2):\penalty0 200--230,
  2021.
\newblock \doi{10.1016/j.jeconom.2020.12.001}.

\bibitem[Chernozhukov et~al.(2025)Chernozhukov, Demirer, Duflo, and
  Fern{\'a}ndez-Val]{cddf2025}
Victor Chernozhukov, Mert Demirer, Esther Duflo, and Iv{\'a}n
  Fern{\'a}ndez-Val.
\newblock Fisher--schultz lecture: Generic machine learning inference on
  heterogeneous treatment effects in randomized experiments, with an
  application to immunization in india.
\newblock \emph{Econometrica}, 93\penalty0 (4), 2025.
\newblock Working paper arXiv:1712.04802.

\bibitem[Gavrilova et~al.(2025)Gavrilova, Lang{\o}rgen, and
  Zoutman]{gavrilova2025}
Evelina Gavrilova, Audun Lang{\o}rgen, and Floris~T. Zoutman.
\newblock Difference-in-difference causal forests, with an application to
  payroll tax incidence in norway.
\newblock \emph{Journal of Applied Econometrics}, 40\penalty0 (7):\penalty0
  727--740, 2025.
\newblock \doi{10.1002/jae.70001}.

\bibitem[Kattenberg et~al.(2023)Kattenberg, Scheer, and Thiel]{kattenberg2023}
Mark A.~C. Kattenberg, Bas~J. Scheer, and Jurre~H. Thiel.
\newblock Causal forests with fixed effects for treatment effect heterogeneity
  in difference-in-differences.
\newblock CPB Discussion Paper 452, CPB Netherlands Bureau for Economic Policy
  Analysis, 2023.

\bibitem[Leng and Dimmery(2024)]{lengdimmery2024}
Yan Leng and Drew Dimmery.
\newblock Calibration of heterogeneous treatment effects in randomized
  experiments.
\newblock \emph{Information Systems Research}, 35\penalty0 (4):\penalty0
  1721--1742, 2024.
\newblock \doi{10.1287/isre.2022.0111}.

\bibitem[Tibshirani et~al.(2024)Tibshirani, Athey, Friedberg, Hadad, Hirshberg,
  Miner, Sverdrup, Wager, and Wright]{grf}
Julie Tibshirani, Susan Athey, Rina Friedberg, Vitor Hadad, David Hirshberg,
  Luke Miner, Erik Sverdrup, Stefan Wager, and Marvin Wright.
\newblock \texttt{grf}: Generalized random forests, 2024.
\newblock R package; \texttt{test\_calibration} reference.

\bibitem[van~der Laan et~al.(2023)van~der Laan, Ulloa-P{\'e}rez, Carone, and
  Luedtke]{vanderlaan2023}
Lars van~der Laan, Ernesto Ulloa-P{\'e}rez, Marco Carone, and Alex Luedtke.
\newblock Causal isotonic calibration for heterogeneous treatment effects.
\newblock In \emph{Proceedings of the 40th International Conference on Machine
  Learning (ICML)}, volume 202 of \emph{PMLR}, pages 34831--34854, 2023.

\bibitem[Wager and Athey(2018)]{wager2018}
Stefan Wager and Susan Athey.
\newblock Estimation and inference of heterogeneous treatment effects using
  random forests.
\newblock \emph{Journal of the American Statistical Association}, 113\penalty0
  (523):\penalty0 1228--1242, 2018.
\newblock \doi{10.1080/01621459.2017.1319839}.

\end{thebibliography}
\end{document}